\begin{document}

%%
%% The "title" command has an optional parameter,
%% allowing the author to define a "short title" to be used in page headers.
\title{Obfuscation as Instruction Decorrelation}

%%
%% The "author" command and its associated commands are used to define
%% the authors and their affiliations.
%% Of note is the shared affiliation of the first two authors, and the
%% "authornote" and "authornotemark" commands
%% used to denote shared contribution to the research.
\author{Ali Ajorian}
\affiliation{%
 \institution{University of Basel}
 \country{Switzerland}}
\email{ali.ajorian@unibas.ch}

\author{Erick Lavoie}
\affiliation{%
 \institution{University of Basel}
 \country{Switzerland}}
\email{erick.lavoie@unibas.ch}

\author{Christian Tschudin}
\affiliation{%
 \institution{University of Basel}
 \country{Switzerland}}
\email{christian.tschudin@unibas.ch}

%\renewcommand{\shortauthors}{Ali Ajorian}

%%
%% The abstract is a short summary of the work to be presented in the
%% article.
\begin{abstract}
Obfuscation of computer programs has historically been approached either as a practical but \textit{ad hoc} craft to make reverse engineering subjectively difficult, or as a sound theoretical investigation unfortunately detached from the numerous existing constraints of engineering practical systems.

In this paper, we propose \textit{instruction decorrelation} as a new approach that makes the instructions of a set of real-world programs appear independent from one another. We contribute: a formal definition of \textit{instruction independence} with multiple instantiations for various aspects of programs; a combination of program transformations that meet the corresponding instances of instruction independence against an honest-but-curious adversary, specifically random interleaving and memory access obfuscation; and an implementation of an interpreter that uses a trusted execution environment (TEE) only to perform memory address translation and memory shuffling, leaving instructions execution outside the TEE.

These first steps highlight the practicality of our approach. Combined with additional techniques to protect the content of memory and to hopefully lower the requirements on TEEs, this work could potentially lead to more secure obfuscation techniques that could execute on commonly available hardware.
\end{abstract}

\begin{CCSXML}
<ccs2012>
   <concept>
       <concept_id>10011007.10010940</concept_id>
       <concept_desc>Software and its engineering~Software organization and properties</concept_desc>
       <concept_significance>500</concept_significance>
       </concept>
 </ccs2012>
\end{CCSXML}

\ccsdesc[500]{Software and its engineering~Software organization and properties}

%%
%% Keywords. The author(s) should pick words that accurately describe
%% the work being presented. Separate the keywords with commas.
\keywords{Software Protection, Obfuscation, Instruction Independence, Instruction Decorrelation}
%% A "teaser" image appears between the author and affiliation
%% information and the body of the document, and typically spans the
%% page.

\received{25 April 2024}
%\received[revised]{12 March 2009}
%\received[accepted]{5 June 2009}

%%
%% This command processes the author and affiliation and title
%% information and builds the first part of the formatted document.
\maketitle

\section {Introduction}

Obfuscation consists in transforming a program to hinder the extraction of valuable information, such as algorithms, data structures, and secret keys, either from its static representation, e.g. source code, or its dynamic behaviour, i.e. the sequence of states traversed during its evaluation, while maintaining the program's original behavior. Obfuscation has numerous applications including software protection \cite{collberg1997taxonomy}, digital watermarking \cite{collberg2002watermarking} and cryptographic application such as public key encryption, secure signatures, zero knowledge proofs \cite{gonzalez2023software,sahai2014use}, etc. Research on obfuscation historically followed two main directions, one empirical and one theoretical. 

On the one hand, since the 90s~\cite{xu2020layered}, researchers have devised ad-hoc heuristics that aim to make the internal structure of software unintelligible to reverse engineers. Although these techniques are widely used and at least partially effective in practice, the exact security properties they provide is not clear, especially since the adversary against which they protect is not well (objectively) defined.  

On the other hand, the existence of a general obfuscator, which would prevent any leakage of internal structure by only revealing input-output behaviour (i.e. a \textit{virtual black-box}~\cite{barak2001possibility}), has been disproved because there exists certain classes of functions whose input-output behaviour does reveal their internal structure, regardless of the scheme used for obfuscation~\cite{barak2001possibility}. For instance, it is easy to guess the missing parameters of polynomial functions in polynomial time by testing a limited number of inputs, e.g. the $m$ and $b$ parameters of the function $y=m*x+b$ can be derived with only two pairs of input-outputs.\footnote{Another example is Quine programs: they take no input and produce a copy of their own source code as their only output and therefore trivially defeat obfuscation.}  Following this result, a number of other alternative approaches have been investigated including \textit{indistinguishability obfuscation}~\cite{barak2001possibility}, \textit{best possible obfuscation}~\cite{goldwasser2007best} and \textit{Grey box obfuscation}~\cite{kuzurin2007concept}, that are all designed as mathematical constructs on Turing machines or gate-level circuits. Directly simulating these approaches on general purpose computers incurs a significant slowdown and most of the details required to turn them into practical techniques are left as future "engineering" work. 

In this paper, we propose a new obfuscation approach, that aims at a middle-ground by leveraging the best of both previous directions. On the one hand, we take an empirical approach to obfuscation: we start with a set of source codes that correctly and efficiently execute on common hardware; we then combine and transform multiple of these programs into a single new program that would still execute on common hardware but we run it through a specially-designed interpreter. Sticking to the programs' original instructions ensures the obfuscation will be practical and not hopelessly slow. On the other hand, we aim to be robust against an adversary that commands a large but finite number of classical computers, i.e. a Probabilistic Polynomial-Time (PPT) adversary, and who could use those computers to perform static and dynamic analyses on the obfuscated programs. In order to achieve this, we design (a) a source transformation and (b) an interpreter such that \textit{each executed instruction appears independent of any other instruction to the adversary}, an approach we call \textit{instruction decorrelation}. Because correlating instructions is the first necessary step in reconstructing source programs, and meaningless random instructions can be interleaved with those of the source programs, this security property turns reverse-engineering into an NP-hard problem whose difficulty is proportional to the number and size of source programs, as well as the amount of random instructions that were injected during obfuscation.

A structurally similar approach of hiding correlation from an attacker is used in Rivest's \textit{Chaffing and Winnowing} technique from 1998~\cite{rivest1998chaffing}: Instead of producing a ciphertext, a message's bits are sent in clear but burried in a stream of other bits, making them indistinguishable from the chaff. In analogy to Rivest's point that confidentiality can be obtained without encryption, we strive for strong obfuscation despite leaving the instructions intact.

Towards the goal of making our approach practical, we make the following contributions: 1) we articulate \textit{instruction independence} as a new unintelligibility property around which to design practical obfuscators, with a formal cryptographic definition that provides security against a PPT adversary; 2) we describe two program transformations that respectively interleave program execution and remove memory access dependencies, making obfuscated source instructions appear independent; 3) we provide an interpreter design that provides instruction independence against an honest-but-curious adversary, does not require the code of obfuscated programs to be executed within a trusted execution environment (TEE), and only leverages a TEE to perform memory address translation and periodic memory shuffling.

By itself, our approach is not sufficient to obtain a complete and perfect obfuscation because it does not protect the content of memory during execution, which could be regularly scanned to extract, e.g., secrets such as the private keys used for asymmetric encryption. Nonetheless, it is a first step in the direction of provably-secure yet practical obfuscation. We believe our current implementation could be combined with additional techniques, such as encryption, to obtain a complete secure execution environment in the future.

The rest of this paper is structured as follows: Section~\ref{sec:preliminaries} introduces our notation conventions, adversary model, as well as background and related work; Section~\ref{sec:obfISint} provides a formal definition of instruction decorrelation as well as several properties that it implies; Section~\ref{sec:construction} presents an implementation designed according to the properties previously derived; Section~\ref{sec:sec-analysis} presents a security analysis of our implementation; Section~\ref{sec:evaluation} provides preliminary performance results; and Section~\ref{sec:conclusion} concludes and highlight possible future research directions.

\section{Preliminaries}
\label{sec:preliminaries}
This section introduces the notation that will be used throughout this work and provides essential background information relevant to our study.
 
\subsection{Notation}
PPT is shorthand for probabilistic polynomial time Turing machine which refers to the class of algorithms that can be executed by a probabilistic Turing machine within a polynomial amount of time. 

By $x \leftarrow D$ we mean that $x$ is a random variable drawn from the distribution $D$. $A(.) \rightarrow x$ denotes that an algorithm A generates an output $x$ from an input $(.)$.

To denote integer numbers $\{1,2, \ldots, n\}$ we use the notation $\mathcal{N}_n$ and by $(s_i)_k$ we mean a sequence $s_1, \ldots, s_k$.

By m!, where m is a positive integer, we mean the product of all positive integers from m down to 1.

The notation || is utilized for two distinct purposes: the absolute value in the traditional sense, and indicating program size, referring to the number of elements or instructions within the program.

A function $\mu : N \rightarrow N$ is defined as negligible if its growth rate is slower than the inverse of any polynomial. In other words, for any positive polynomial $p$, there exists a natural number $n_0$ such that $\mu(n)$ is smaller than 1 divided by the value of the polynomial $p(n)$ for all $n$ greater than $n_0$.

To avoid introducing additional notation, we use $P$ to denote both the program $P$ itself and the set of its instructions.

We write $s_1$ for an instruction coming from the source program and $\hat s_1$ for the obfuscated version of $s_1$ generated by an obfuscator.

\subsection{Adversarial Model}
Adversaries are assumed to be honest but curious end users who possess physical access to the obfuscated program. They can run the program and monitor its internal states but are unable to tamper with it. While our end goal is to eventually generalize our techniques to adversaries able to tamper with the obfuscated program and execution environment, solutions to the latter will still need to resist passive observation so we focus on those passive capabilities first. Nonetheless, our adversary model may still be practical and relevant in scenarios where the program is stored in one-time writable memories that are resistant to tampering, such as a protected BIOS.

An efficient adversary, alternatively called a PPT adversary, is a computational entity that operates within a practical and feasible computational time frame, typically limited to polynomial time complexity, which corresponds to having access to a finite but arbitrarily large number of classical (non-quantum) computers to break the obfuscation.  Such an adversary may execute a polynomial number of individual attacks: If each attack has a negligible probability of success, then the adversary has a less-than-one probability of breaking the security scheme.  Increasing the size of the obfuscated programs linearly with random instructions provides an exponential increase in the difficulty of breaking the obfuscation.

%Moreover, a security scheme has an exponential advantage against any PPT adversary if the probability of breaking the scheme is a negligible function exponentiated by a factor $\lambda$, called the security parameter such as, e.g., the number of bits used for an encryption scheme.

%\subsection{Symmetric Groups}
%In general for any set $A$, a function $\phi_i:A \rightarrow A$ is called a \textit{permutation} of A if $\phi_i$ is one-to-one and onto. For a set with $n$ elements there are $n!$ permutations, i.e $1\leq i\leq n!$. As each permutation is a function, composition of these permutations could be considered as an operation $\circ$. Consequently, the set of all permutations of A, along with the operation $\circ$ forms a Group which is called \textit{Symmetric Group}, denoted as $S_n$. The identity element for this Group is the permutation $\epsilon:A\rightarrow A$ where $\epsilon(x)=x$. Moreover, each element $\phi_i$ has an inverse such that $\phi \circ \phi^{-1}=\epsilon$. 

%A permutation which is computationally indistinguishable from a true random permutation for a PPT adversary is called \textit{Pseudo Random Permutation (PRP)}. The algorithm which generates a PRP is deterministic and, as a principle in cryptography, it is considered to be known to the adversary. Therefore, the PRP definition lies on a secret key. The function $\phi_i:A \times K \rightarrow A$ 

\subsection{Background and Related Works}
\label{sec:background}

Software obfuscation is a technique that transforms computer programs into semantically equivalent versions that are difficult to understand \cite{collberg1997taxonomy}. It helps conceal the logic and purpose of the code, making it resistant to decompiling and reverse engineering. Additionally, it serves as a fundamental building block in cryptography enabling the construction of other cryptographic primitives, such as public key encryption, digital signature, multiparty secure computation, fully homomorphic encryption, etc. \cite{gonzalez2023software,sahai2014use}. 

The concept of obfuscation was initially introduced at the International Obfuscated C Code Contest in 1984, leading to the development of practical obfuscation heuristics. These heuristics often involve adding junk code and rearranging data or code segments to increase program complexity. However, they lack a formal definition for obfuscation and do not provide prove security guarantees. While examining different obfuscation heuristics, Xu et al. introduced a layered obfuscation approach as a practical framework to utilize these techniques \cite{xu2020layered}. They developed a taxonomy hierarchy that aids developers in selecting appropriate obfuscation techniques and designing layered obfuscation solutions tailored to their specific requirements.

Theoretical exploration of obfuscation began in~2001 with the influential work by Barak et al. \cite{barak2001possibility}, with the formalization of an ideal obfuscator, referred to as a "virtual black box (VBB)," through which an obfuscated program should only reveal its input-output functionality without disclosing any additional information. They showed that specially-constructed functions may reveal their internal state, regardless of the scheme used for obfuscation, therefore proving the impossibility of achieving VBB obfuscation in general for all possible programs. Nonetheless, the functions they construct \textit{are designed} to reveal their internal state through input-outputs, which is unlikely to be the case for practical programs that would benefit from obfuscation. This therefore leaves ample of room for designing techniques that do work on a practical subset of all possible programs.
On the other end of the theoretical spectrum, specific classes of functions, such as point functions \cite{canetti1997towards, lynn2004positive, wee2005obfuscating} or evasive functions \cite{zobernig2020mathematical}, have been shown to be obfuscatable in a black box manner. It has not been shown yet that \textit{only} those classes of functions can be VBB-obfuscated, which still leaves open the question of how large the class of VBB-obfuscatable functions is. Additionally, Goldwasser et al. demonstrated that achieving VBB obfuscation becomes more challenging, and in some cases impossible, when auxiliary input related to the adversary's prior information is available \cite{goldwasser2005impossibility}. They proved that there exist many natural classes of functions that cannot be obfuscated. 

In response to the negative results of general VBB obfuscation, Barak et al., proposed an alternative weaker notion of obfuscation called \textit{indistinguishability obfuscation (iO)} \cite{barak2001possibility}. An iO ensures that the obfuscation of any two programs with the same functionality cannot be distinguished from each other. It also can be considered as a fundamental low-level algorithmic operation that serves as a key building block in cryptographic systems \cite{gonzalez2023software}. The definition of iO does not provide an absolute guarantee against information leakage from the obfuscated program. However, it does offer a significant advantage by ensuring that alternative obfuscators cannot generate obfuscated programs with the same functionality that are more indistinguishable than those produced by iO \cite{goldwasser2007best}.

Goldwasser et al. conducted research on the concept of \textit{best possible obfuscation,} which allows an obfuscator to disclose some information that is also leaked or learnable by any alternative program \cite{goldwasser2007best}. In essence, a best possible obfuscator aims to minimize the amount of information it reveals while maintaining equivalent functionality. Goldwasser et al. also demonstrated that any efficiently computable indistinguishability obfuscator achieves this notion of best possible obfuscation \cite{goldwasser2007best}.

Garg et al. made significant progress in the field of indistinguishability obfuscation. They showed the existence of iO for all polynomial-size programs and introduced a candidate construction specifically designed for a certain complexity class of programs. To expand the applicability of iO to all polynomial-size programs, they utilized the capabilities of Fully Homomorphic Encryption (FHE) \cite{Garg2013PointFunction}.

The theoretical obfuscation approaches mentioned earlier, along with similar ones like Gray Box obfuscation \cite{kuzurin2007concept} and VBB for the random oracle model \cite{lynn2004positive}, have made attempts to address the lack of precise definitions and security guarantees in obfuscation. However, these approaches often rely on abstract models that are disconnected from real-world programs, posing challenges when it comes to practical implementation. Additionally, these approaches have a general understanding of obfuscation, independent of the particular purposes or original code of a program. For example, they exhibit the same behavior when obfuscating a program that contains secret keys as they do when obfuscating a program that incorporates a valuable algorithm.

In contrast to existing theoretical work, we approach obfuscation as a concrete program transformation technique, based on compilation and interpretation fundamentals. Rather than aiming for performance, as is the traditional focus of research in programming language implementations, we aim for sound obfuscation of every \textit{structural} and \textit{dynamic} aspects of programs, in a bottom-up systematic manner. In contrast to previous practical techniques based on heuristics, we ground our techniques with cryptographic definitions of the expected properties so that the obfuscation will resist stronger adversaries than usually implicitly assumed.

%\el{\textit{EL: There should be a discussion of Fully Homomorphic Encryption, and compilation/interpretation techniques that have been applied to obfuscation. You can shorten the theoretical cryptographic investigation results to make space for it.}}

\section{Instruction Decorrelation}
\label{sec:obfISint}

Existing obfuscation heuristics, such as adding junk code or rearranging code sections, are effectively special cases of decorrelation in that they respectively lower the probability that any two instructions are correlated and make control-flow dependencies harder to identify. In this section, we generalize these heuristics to a principled decorrelation approach, starting with some basic definitions.

\subsection{Intuition}

As a preliminary motivation, we require an obfuscated program to still directly execute on regular hardware so that it can execute reasonably fast on a large number of hardware platforms. This implies that the instructions of the obfuscated and non-obfuscated versions of the same program will both ultimately be based on the same hardware instruction set. Since this approach precludes hiding the semantics and operations of individual instructions, instead, the instructions of \textit{multiple programs} are combined into a single obfuscated program so that it is difficult to associate the obfuscated instructions to their source programs, making the obfuscation a one-way transformation. Obfuscating a set of programs allows both combining multiple useful programs together \textit{and} making the inverse process harder by adding junk code, similar to existing heuristics.

 More precisely, by this perspective, an obfuscator is a transformation $\mathcal{O}$ that takes a set of programs $P=\{P_1, \ldots, P_n\}$ as inputs and interleaves their instructions to produce a new single obfuscated program, $\mathcal{O}(P)$. The obfuscated program should have three main properties, which we explain as follows.

First, and perhaps obviously, it should maintain the same functionality as running each individual program $P_1, \ldots, P_n$ separately. This property is commonly referred to as \textit{functionality preservation} \cite{barak2001possibility} and needs no additional discussion.

Second, informally the transformation should not slow down the execution so much that it changes the complexity class of the program. Formally, obfuscation should introduce only a polynomial slowdown factor in the running time of the obfuscated program so that a machine that terminates in polynomial time should still do after obfuscation~\cite{barak2001possibility}. This implies that the running time of $\mathcal{O}(P)$ should only be polynomially larger than the running time of all input programs. From an implementation point of view, this is a  weak requirement because, e.g., a 1000x slowdown is still polynomial.

Third, and this is our main contribution since the previous two requirements are fairly standard, \textit{instructions in each possible pairs in $\mathcal{O}(P)$ should appear independent from one another} in order to prevent adversaries from reconstructing input programs. Formally, given polynomial time for running static and dynamic analyses on $\mathcal{O}(P)$ and $P$ with a free choice of inputs, on a finite but arbitrarily large number of classical computers, an adversary $A$ should have no more than a negligible advantage for guessing that any two (different) obfuscated instructions originated  from the same input program $P_i$ than if it picked any two instructions at random. 

%\el{Add additional intuition for why providing $M$ to adversary.}

\subsection{Formal Definition}

Combining all three requirements described in the previous section, we formally define an \textit{instruction-independent obfuscation} as follows:
\begin{definition}[Instruction-Independent Obfuscation]
\label{RIObfs}
A probabilistic algorithm $\mathcal{O}$ is considered an instruction independent obfuscator for a set of input programs $P=\{P_1, \ldots, P_n\}$ if its output satisfies the following three conditions:
\begin{itemize}
	\item[--] \textbf{Functionality preservation}: $\mathcal{O}(P)$ computes the same functionality as running all individual machines in $P$.
	\item[--] \textbf{Polynomial slowdown}: There exist polynomials $p$ and $q$ such that $|\mathcal{O}(P)| \leq q(\sum_{i=1}^n |P_i|)$ and if $P_1, \ldots, P_n$ halts within a maximum $t$ of steps on inputs $x_1, \ldots, x_n$ respectively, the obfuscated machine $\mathcal{O}(P)$ halts within $q(t)$ steps on the same input.
	\item[--] \textbf{Unintelligibility}: for any PPT adversary A, and for all $i \in \{1, ..., n\}$ there exists a negligible function $\epsilon$ such that:	
	\begin{equation}
	\begin{split}
	\label{eq:unint}
		 Pr[ &A(\mathcal{O}(P),P) \rightarrow (\hat s_1,\hat s_2) \\
		& : \hat s_1, \hat s_2 \in \mathcal{O}(P) \wedge \hat s_1 \neq \hat s_2 \wedge s_1,s_2 \in P_i] \leq \\
		 Pr[ & \hat s_1, \hat s_2 \in \mathcal{O}(P) : \hat s_1 \neq \hat s_2 \wedge s_1,s_2 \in P_i] + \epsilon(.)
	\end{split}
	\end{equation}
\end{itemize}
\end{definition}

The probability that two different instructions of $\mathcal{O}(P)$ belong to the same program is obtained with the following equation:
	
\begin{align*}
\label{Eq:Qn}
%\begin{split}
&Pr[\hat s_1, \hat s_2 \in \mathcal{O}(P) : \hat s_1 \neq \hat s_2 \wedge s_1,s_2 \in P_i] =\\
&\frac{|P_1|}{\sum_{k=1}^n |P_k|}\times\frac{|P_1|-1}{(\sum_{k=1}^n |P_k|)-1} + \\
&\frac{|P_2|}{\sum_{k=1}^n |P_k|}\times\frac{|P_2|-1}{(\sum_{k=1}^n |P_k|)-1} + \ldots + \\
\displaybreak[0]
&\frac{|P_n|}{\sum_{k=1}^n |P_k|}\times\frac{|P_n|-1}{(\sum_{k=1}^n |P_k|)-1} \\
&= \frac{\sum_{k=1}^n |P_k|^2- \sum_{k=1}^n |P_k|}{(\sum_{k=1}^n |P_k|)^2- \sum_{k=1}^n |P_k|}
%\end{split}
\end{align*}

In the special case where the length of programs is equal, i.e  $|P_i|=l$ for all $i \in \{1, \ldots, n\}$, the equation simplifies to $\frac{l-1}{nl-1}$ and as the programs' length becomes sufficiently large, the equation converges to $1/n$.

Obtaining a practical obfuscation scheme that satisfies Definition~\ref{RIObfs} is however not trivial: to achieve this, an obfuscation designer needs to carefully design together the program transformations and the execution environment on which the obfuscated programs will run. Moreover, Definition~\ref{RIObfs}  does not provide a criteria to verify that all transformations identified are \textit{sufficient}, it only states what resulting properties they should achieve together. Nonetheless, a number of useful additional properties can be deduced from it to obtain a principled methodology to obfuscate programs. In the next sections, we present those that our current implementation supports, later presented in Section~\ref{sec:construction}.

\subsection{Source Hiding}

The astute reader may have noticed that Definition~\ref{RIObfs} does not explicitly break the link between an individual instruction and the source program from which it originated, because it requires what may at first sight seem like a weaker property: namely, that two instructions from the same program may not be associated to one another, even if we do not know what that source program is. However, this property is stronger than it appears because it actually implies that an individual instruction may not be associated to the source program from which it originated:

\label{sec:instind}
\begin{theorem}
\label{th:singleinst} (source hiding):
Equation \ref{eq:unint} implies that an adversary cannot establish a relationship between any randomly selected instruction $s \in \mathcal{O}(P)$ and its originating program $P_i \in P$ with more than a negligible advantage compared to random guessing.
\end{theorem}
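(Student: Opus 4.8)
The plan is to prove the contrapositive: if some PPT algorithm can attribute a single randomly chosen obfuscated instruction to its source program noticeably better than chance, then one can construct a PPT adversary that beats the bound in Equation~\ref{eq:unint}. So suppose $B$ is a PPT algorithm such that, for $s$ drawn uniformly from $\mathcal{O}(P)$, the call $B(\mathcal{O}(P),P,s)$ returns the index of the source program of $s$ with probability at least $\beta+\delta(\cdot)$, where $\beta=\max_i|P_i|/\sum_k|P_k|$ is the success probability of the best instruction-independent guess and $\delta$ is non-negligible; I will derive a contradiction with Equation~\ref{eq:unint}.

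First I would build a pair-attacker out of $B$. Because $|\mathcal{O}(P)|$ is polynomial by the polynomial-slowdown clause of Definition~\ref{RIObfs}, an adversary $A$ can afford to run $B$ once on every instruction of $\mathcal{O}(P)$ and sort the instructions into bins $G_1,\dots,G_n$ according to the label $B$ returns; $A$ then outputs two uniformly random distinct elements of a suitably chosen bin $G_{i^\star}$ (and outputs an arbitrary distinct pair from $\mathcal{O}(P)$ if $|G_{i^\star}|<2$). This $A$ is clearly PPT, and its two outputs are by construction distinct elements of $\mathcal{O}(P)$ as Equation~\ref{eq:unint} requires.

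Second I would single out the bin $i^\star$ by an averaging argument. Write $p_i=|P_i|/\sum_k|P_k|$, let $q_i$ be the probability that $B$ labels a uniformly random instruction $i$, and let $\rho_i$ be the probability that an instruction genuinely comes from $P_i$ conditioned on $B$ having labelled it $i$. The accuracy hypothesis says $\sum_i q_i\rho_i\ge\beta+\delta$, whereas $\sum_i q_i p_i\le\max_i p_i=\beta$ since $(q_i)$ is a probability vector; subtracting gives $\sum_i q_i(\rho_i-p_i)\ge\delta$, so by pigeonhole some $i^\star$ satisfies $q_{i^\star}(\rho_{i^\star}-p_{i^\star})\ge\delta/n$, which (as a product of two numbers in $[0,1]$) forces both $q_{i^\star}\ge\delta/n$ and $\rho_{i^\star}-p_{i^\star}\ge\delta/n$. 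Informally: the instructions $B$ throws into bin $i^\star$ really do come from $P_{i^\star}$ at a rate non-negligibly above the background rate $p_{i^\star}$. Feeding this $i^\star$ to $A$, the probability that both of $A$'s outputs come from $P_{i^\star}$ is, up to lower-order corrections from the finiteness of $G_{i^\star}$, about $\rho_{i^\star}^2$, while the right-hand side of Equation~\ref{eq:unint} at $i=i^\star$ is the probability that two random instructions of $\mathcal{O}(P)$ both lie in $P_{i^\star}$, namely about $p_{i^\star}^2$. Since $\rho_{i^\star}^2-p_{i^\star}^2=(\rho_{i^\star}-p_{i^\star})(\rho_{i^\star}+p_{i^\star})\ge(\delta/n)^2$ is non-negligible, $A$ violates Equation~\ref{eq:unint}; hence no accurate single-instruction attributor $B$ exists, which is the claim.

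The high-level reduction above is easy; the real work, and the main obstacle, is making the final estimate airtight. Three points need care. (i) One must pin down what ``random guessing'' means so that $\beta$ is well defined and $\sum_i q_i p_i\le\beta$ genuinely holds — I used the best instruction-independent predictor (always name the largest program), and a different convention (e.g.\ a uniform guess) would need a correspondingly adjusted inequality. (ii) The draws with $|G_{i^\star}|\le1$ must be shown to cost only a negligible amount; since $q_{i^\star}\ge\delta/n$ forces the expected size of $G_{i^\star}$ to be $\Omega(\delta\,|\mathcal{O}(P)|/n)$, the draws where it is tiny are rare and can be absorbed. (iii) Most delicately, $\rho_{i^\star}$ is a conditional probability that also ranges over the obfuscator's coins, so ``$\Pr[\text{both outputs in }P_{i^\star}]\approx\rho_{i^\star}^2$'' really compares the expectation of $\bigl(|G_{i^\star}\cap P_{i^\star}|/|G_{i^\star}|\bigr)^2$ with $\rho_{i^\star}^2$; the clean fix is to first fix the random tape of $\mathcal{O}$ — so that every $\rho$, $q$, $p$ becomes an honest count ratio and the bins are deterministic — prove the advantage on the non-negligible fraction of tapes on which $B$ is accurate (such tapes exist by a Markov/averaging argument because $B$ is accurate overall), and only then average back. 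One should also note that $i^\star$ depends on these distributions rather than on the public inputs, so $A$ either hard-wires it or, when $n$ is small, tries all $n$ candidate bins; none of this is deep, but the bookkeeping with negligible functions is where the effort goes.
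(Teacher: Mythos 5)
Your proposal is correct in outline, but it takes a genuinely different --- and substantially more rigorous --- route than the paper. The paper's proof is a three-line pigeonhole argument: assuming the adversary \emph{can} attribute instructions to their sources, it samples $n+1$ instructions, attributes each, and by pigeonhole two must land in the same program, yielding a correlated pair ``with probability 1.'' That argument really only works when attribution is (essentially) perfect; it silently drops the quantitative content of the theorem, since an attributor with merely a non-negligible \emph{advantage} over guessing will mislabel some of the $n+1$ samples and the pigeonhole collision no longer identifies a genuinely correlated pair. Your reduction closes exactly this gap: by running the attributor $B$ on all of $\mathcal{O}(P)$ (affordable by polynomial slowdown), binning, locating via averaging a bin $i^\star$ with $q_{i^\star},\ \rho_{i^\star}-p_{i^\star}\ge\delta/n$, and outputting a random distinct pair from it, you convert \emph{any} non-negligible single-instruction advantage into a non-negligible pair advantage of roughly $(\delta/n)^2$, violating Equation~\ref{eq:unint}. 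The three caveats you flag are the right ones and are all handleable by standard means (non-uniform hard-wiring of $i^\star$ or a random choice among the $n$ bins at a further $1/n$ loss; a Markov argument for small bins; fixing the obfuscator's tape before squaring). One point worth making explicit: your baseline $\beta=\max_i|P_i|/\sum_k|P_k|$ is not the paper's implicit uniform-$1/n$ baseline (cf.\ its appendix), and with unequal program sizes the theorem is in fact \emph{false} under the $1/n$ convention (always naming the largest program beats $1/n$ while revealing nothing about pairs); your convention is the one under which the statement is both true and provable, and it collapses to $1/n$ precisely in the near-equal-size regime that Theorem~\ref{th:distr-decorrelation} later forces. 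In short: the paper buys brevity at the cost of only covering the probability-1 attributor; your argument is longer but actually proves the stated quantitative claim.
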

\begin{proof}
Suppose, for the sake of contradiction, that an adversary has the ability to relate a randomly selected instruction to its source program. In that case, the adversary can randomly choose at least $n+1$ instructions, where $n$ represents the number of input programs. According to the pigeonhole principle, there must be at least two instructions that belong to the same input program which means that the adversary can find at least two correlated instructions in polynomial time with probability 1. This contradicts Equation \ref{eq:unint}.
\end{proof}

\subsection{Uniformity of Instructions Distributions}

Our obfuscation approach preserves the basic operations performed by the source programs, e.g. addition and comparison, because they are ultimately implemented with the same hardware instructions. Adversaries may therefore try to leverage a priori knowledge about the distribution of instructions in the source programs: if they significantly differ, then observing certain instructions in the obfuscator output enables the adversary to associate these instructions back to the source programs. The obfuscator should therefore ensure that knowledge of the distribution of instructions of the programs provides no advantage. Formally:

\begin{theorem}
\label{th:distr-decorrelation} (source distribution uniformity):
Theorem \ref{th:singleinst} implies that input programs must have identical probability distributions and almost the same size.
\end{theorem}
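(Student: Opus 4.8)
The plan is to argue by contraposition. Assuming the conclusion fails — i.e.\ two of the input programs have perceptibly different instruction distributions, or perceptibly different sizes — I will exhibit a PPT adversary that pins down the source program of a uniformly random instruction of $\mathcal{O}(P)$ with non-negligible advantage over random guessing, contradicting Theorem~\ref{th:singleinst}. The reason such an adversary is available at all is the standing premise of this section: because $\mathcal{O}(P)$ executes on the same hardware instruction set, the operation carried out by each instruction is visible, so the adversary is free to condition on an instruction's value.

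First I would fix the Bayesian picture. Write $q_i = |P_i| / \sum_{k} |P_k|$ for the prior probability that a uniformly drawn instruction of $\mathcal{O}(P)$ originates in $P_i$, and let $D_i$ be the distribution of instruction values inside $P_i$. For an observed value $s$, Bayes' rule gives $\Pr[s \in P_i \mid s] \propto q_i\, D_i(s)$, so the optimal adversary outputs $\arg\max_i q_i D_i(s)$ and succeeds with probability $\sum_s \max_i q_i D_i(s)$. Since a maximum is at least the average, this is at least $\tfrac1n$ — precisely the success rate of guessing uniformly at random among the $n$ programs — and, crucially, equality forces $q_i D_i(s)$ to be independent of $i$ for every $s$, hence all $q_i$ equal (all $|P_i|$ equal) and all $D_i$ equal.

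Next I would convert this rigidity into the two stated necessary conditions, separating the two sources of advantage for clarity. If the distributions differ, say $D_1(s^\ast) \neq D_2(s^\ast)$, the excess $\sum_s \max_i q_i D_i(s) - \tfrac1n$ is bounded below by a function of the statistical distance among the $D_i$ alone — a constant that does not shrink with the security parameter, hence non-negligible. If instead the distributions coincide but the sizes differ appreciably, the trivial adversary that ignores $s$ and always names the largest program already succeeds with probability $\max_i q_i > \tfrac1n$ by a non-negligible margin. Either branch contradicts Theorem~\ref{th:singleinst}; contrapositively, source hiding forces the $D_i$ to be identical up to negligible statistical distance and the $|P_i|$ to agree up to a negligible fraction of $\sum_k |P_k|$, which is the precise content of ``almost the same size''.

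The step I expect to be the main obstacle is the stability estimate: turning ``the MAP success probability is within a negligible $\epsilon$ of its minimum $\tfrac1n$'' into quantitative bounds on how far the $q_i$ may spread and how far apart the $D_i$ may be, keeping $\epsilon$ negligible throughout. This needs the right distance on distributions together with a clean inequality relating the slack in ``$\max \ge \mathrm{avg}$'' to the dispersion of the terms $q_i D_i(s)$; it is also where one must be explicit that the baseline ``random guessing'' means the uniform guess, so that ``non-negligible advantage'' is unambiguous. A smaller, conceptual caveat to record is that the argument constrains the programs \emph{as interleaved}, so any junk padding meant to equalize sizes or reshape distributions has to be counted as part of preparing the input set $P$.
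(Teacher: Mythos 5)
Your proposal is correct and follows essentially the same route as the paper's appendix proof: both construct the Bayes-optimal (MAP) adversary that observes an instruction's value $s$ and guesses $\arg\max_i |P_i|\,D_i(S{=}s)$, then argue its advantage over the uniform-guess baseline of $1/n$ is non-negligible unless the $D_i$ coincide and the $|P_i|$ are nearly equal. Your ``$\max \geq$ average, with equality iff all $q_i D_i(s)$ agree'' characterization is a slightly cleaner way to extract the exact necessary conditions than the paper's chain of inequalities, but the underlying adversary and contradiction with Theorem~\ref{th:singleinst} are the same.
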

\begin{proof}

(sketch) 
Without prior knowledge, the adversary may only randomly guess the provenance of a randomly selected instruction with a probability of $1/n$. Having access to the probability distributions of input instructions for each program, the adversary makes educated guesses with a probability improved by $\frac{n-1}{n^2}$ compared to random guessing, providing a non-negligible advantage because it is a polynomial fraction. This contradicts Theorem \ref{th:singleinst}. The full proof can be found in Section \ref{app:proof1} of the appendix.
\end{proof}

The simplest approach we have found so far of ensuring Theorem~\ref{th:distr-decorrelation} is to simply add meaningless instructions to every input programs so that  they all follow a uniform distribution across all possible hardware instructions. 

%Formally, each input machine $M_i$ is associated to a probability distribution $D_i$ over the set of all possible instructions. The distributions for all input machines is represented as a distribution ensemble $\{D_i\}_{i\in \mathcal{N}_n}$. Theorem \ref{th:singleinst} may thus be instantiated for any PPT adversary $A$ with this prior knowledge:
%\begin{equation}
%\label{eq:indist}
% Pr[A(\mathcal{O}(M),s)=i: s\leftarrow \mathcal{O}(M)| \{D_i\}_{i\in \mathcal{N}_n}] \leq R_i + \epsilon(.)
%\end{equation}

%where $i\in \{1, \ldots, n\}$ and $R_i$ is the a priori knowledge of s drawn from $M_i$ which is equal to:
%\begin{equation}
%\label{Eq:Ri}
%R_i = Pr[M_i] \times Pr[S=s|M_i] = \frac{|M_i|}{\sum_{k=1}^n |M_k|} \times D_i(S=s)
%\end{equation}

%Equation \ref{eq:indist} states that prior knowledge should grant adversaries no more than a negligible advantage in distinguishing different instructions of $\mathcal{O}(M)$ even when knowing input distributions. In other words, adversaries should not be able to relate instructions to input programs better than a random guess. As the size of the Turing machines increases, $R_i$ (Eq.~ \ref{Eq:Ri}) approaches ${\mathcal{D}_i(S=s)}/{n}$ and if the distributions are uniform, $R_i=1/n$.
 
\subsection{Decorrelation of Control Flow}

Adversaries may also try to leverage control flow dependencies between instructions of the obfuscated program, e.g., when a comparison sets internal processor registers that affects the branching behaviour of a following instruction. When observing obfuscated instructions that still depend on one another because of control-flow, the adversary may correctly conclude that they originated from the same input program. Generalizing, the obfuscator should decorrelate control-flow dependencies of the obfuscated program from those of the input programs. Formally:

\begin{theorem}
\label{th:cfg-decorrelation} (control-flow decorrelation):
Equation \ref{eq:unint} implies either or both following conditions: 
\begin{enumerate}
     \item obfuscated instructions that originated from the same input program, i.e. $\hat s_1, \hat s_2 \in \mathcal{O}(P) :  s_1, s_2  \in P_i$, should not have control-flow dependencies in the obfuscated program $\mathcal{O}(P)$; 
     \item the obfuscator should introduce random control-flow dependencies between instructions not coming from the same input program, $\hat s_1, \hat s_2 \in \mathcal{O}(P) : s_1, s_2  \notin P_i$.
\end{enumerate}
\end{theorem}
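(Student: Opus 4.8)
The plan is to argue by contraposition, mirroring the structure used for Theorems~\ref{th:singleinst} and~\ref{th:distr-decorrelation}: assume that \emph{both} listed conditions fail and then exhibit a PPT adversary that violates Equation~\ref{eq:unint}. Negating condition~(1) yields at least one pair of obfuscated instructions $\hat s_1, \hat s_2 \in \mathcal{O}(P)$ with $s_1, s_2 \in P_i$ that still carry a control-flow dependency in $\mathcal{O}(P)$ --- for instance, $\hat s_1$ writes a processor flag or register that $\hat s_2$ subsequently reads to decide a branch. Negating condition~(2) means the obfuscator injects no fresh control-flow dependencies between instructions of distinct source programs, so every control-flow dependency present in $\mathcal{O}(P)$ necessarily links two instructions of the same $P_i$.

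The next step is to make precise the sense in which such a dependency is \emph{observable} to the adversary. Since the adversary may run $\mathcal{O}(P)$ on inputs of its own choosing and record the sequence of internal states traversed (Section~\ref{sec:preliminaries}), a control-flow dependency between $\hat s_1$ and $\hat s_2$ manifests as a reproducible correlation in the trace: the value written by $\hat s_1$ determines which successor of $\hat s_2$ is executed. A routine static or dynamic analysis --- def-use chains over flag and register state, or simply perturbing inputs and observing which branch outcomes flip --- therefore lets the adversary output such a pair in polynomial time. By the negation of~(2), any pair exhibiting this correlation must originate from the same source program, so the adversary returns a valid $(\hat s_1,\hat s_2)$ with $s_1, s_2 \in P_i$ with probability essentially $1$, whereas the baseline probability appearing on the right-hand side of Equation~\ref{eq:unint} is $Q_n = \frac{\sum_k |P_k|^2 - \sum_k |P_k|}{(\sum_k |P_k|)^2 - \sum_k |P_k|} \approx 1/n$. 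The advantage $1 - Q_n$ is non-negligible, contradicting Equation~\ref{eq:unint}; the contraposition then gives the theorem.

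The step I expect to require the most care is ruling out the degenerate reading in which condition~(1) ``fails'' only because a control-flow dependency is \emph{present but undetectable}. I would argue this situation cannot arise for control-flow dependencies specifically: such a dependency is \emph{defined} by the fact that the state produced by one instruction changes the branch outcome of another, and that change is, by construction, visible in the execution trace the adversary is permitted to observe under our model --- so ``present but hidden'' is contradictory here, unlike, say, a data dependency routed through obfuscated memory. A secondary point to address is the partial satisfaction of condition~(2): if the obfuscator adds \emph{some} but not enough random cross-program dependencies, then a pair flagged by the analysis is same-program only with probability equal to the fraction of same-program dependencies among all dependencies; I would observe that unless this fraction is pushed down to $Q_n + \mathrm{negl}(\cdot)$ --- which is precisely the quantitative meaning of condition~(2) holding strongly enough --- the same contradiction recurs, so the ``either or both'' disjunction is exactly what the inequality forces.
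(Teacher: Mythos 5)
Your proposal is correct and follows essentially the same route as the paper's own proof: negate both conditions, observe that then every detectable control-flow dependency in $\mathcal{O}(P)$ must link two instructions of the same source program, and conclude that an adversary who outputs such a pair succeeds with probability~$1$, contradicting Equation~\ref{eq:unint}. Your treatment is somewhat more careful than the paper's --- in particular the discussion of why control-flow dependencies are necessarily observable in the execution trace and the quantitative remark about partially satisfying condition~(2) --- but the core argument is identical.
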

\begin{proof}
Suppose, for the sake of contradiction, that the obfuscator does introduce control-flow dependencies between any two instructions $\hat s_1, \hat s_2 \in \mathcal{O}(P) :s_1,s_2 \in P_i $ and does not introduce control-flow dependencies between in two instructions $\hat s_3, \hat s_4 \in \mathcal{O}(P) : s_3 \in P_i \wedge s_4 \in P_j \wedge i \neq j $. Therefore, when the adversary sees any two instruction $\hat s_1, \hat s_2 \in \mathcal{O}(M)$ with control-flow dependencies he knows with probability 1 that $s_1, s_2 \in M_i$, which contradicts Equation \ref{eq:unint}.
%\el{Why s3, s4 are important in your argument? I think the proof is not clear enough. if s3 and s4 are not in the O(M) then it give the gift of certainty for the adversary to argue about s1 and s2. this concept is not reflected in your argument.}
\end{proof}

Quantifying how many random control-flow dependencies should be added to obfuscate the original dependencies requires a probability analysis. For the rest of this paper we take the simpler position of removing any control-flow dependencies from any two obfuscated instructions. This has the added benefit of making it easier to combine multiple obfuscated programs together.

%Formally, two source instructions belonging to the same input machine, should not lead to any observable control-flow dependencies between the two instructions in the obfuscated machine $\mathcal{O}(M)$:
%
%	\begin{equation}
%	\label{eq:cfg}
%		s_1, s_2  \in M_i \not \Rightarrow \hat s_1 \leadsto_{\mathcal{O(M)}} \hat s_2 \vee \hat s_2 \leadsto_{\mathcal{O(M)}} \hat s_1
%	\end{equation}
%	
%	\label{sec:instind}
 
\subsection{Decorrelation of Data Dependencies}
\label{sec:Chinput}

If two instructions read or write to the same variable using the same label or the same memory offset, they most likely belong to the same program, otherwise this may violate functionality preservation. Adversaries may leverage this information to correlate instructions. Because labels may be statically defined and corresponding memory locations may be dynamically computed during execution, decorrelating instructions requires both a static and dynamic obfuscation of data dependencies. Formally:

\begin{theorem}
\label{th:data-access-decorrelation} (data dependencies decorrelation):
Equation \ref{eq:unint} implies either or both of the followings:
\begin{enumerate}
	\item accesses to the same label or memory location by obfuscated instructions that originated from the same input program, i.e. $\hat s_1, \hat s_2 \in \mathcal{O}(P) :  s_1, s_2  \in P_i$ should appear as accesses to different labels or memory locations to an adversary; %\el{I suggest to use the word "label" here because 1. it is more clear 2. the program itself does not use memory locations, it uses labels which will be translated to memory locations, by evaluator, in run time!}
	\item some obfuscated instructions that originated from different input programs, i.e. $\hat s_1, \hat s_2 \in \mathcal{O}(P) :  s_1  \in P_i \wedge s_2 \in P_j \wedge i \neq j$ should access the same label or memory location.
\end{enumerate}
\end{theorem}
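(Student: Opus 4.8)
The plan is to prove the statement by contradiction, in direct parallel with the proof of Theorem~\ref{th:cfg-decorrelation}, treating ``accesses the same label or memory location'' as the observable analogue of ``has a control-flow dependency''. First I would negate the disjunction ``(1) or (2)'', which amounts to assuming simultaneously: (not~1) there exist two obfuscated instructions $\hat s_1,\hat s_2\in\mathcal{O}(P)$ with $s_1,s_2\in P_i$ that access the same label or memory location \emph{and} whose shared access is not disguised, i.e.\ it is still observed as one common label/location; and (not~2) no two obfuscated instructions $\hat s_3,\hat s_4\in\mathcal{O}(P)$ with $s_3\in P_i$, $s_4\in P_j$, $i\neq j$ ever access the same label or memory location.

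Next I would exhibit the distinguisher. Using the adversarial model (polynomial-time static \emph{and} dynamic analysis of $\mathcal{O}(P)$ with free choice of inputs), the adversary enumerates the polynomially many obfuscated instructions together with the labels they reference statically and the addresses they touch along an execution trace, and outputs any pair $\hat s_1\neq\hat s_2$ it observes accessing a common label/location. By (not~1) at least one such pair is present and observable, so the adversary halts with a valid output in polynomial time; by (not~2) every such observable pair satisfies $s_1,s_2\in P_i$ for a common $i$. Hence the adversary returns a same-program pair with probability $1$, exceeding the right-hand side of Equation~\ref{eq:unint} (which is at most $1$ and, when there are at least two programs of comparable size, bounded away from $1$) by a non-negligible amount, contradicting Equation~\ref{eq:unint}. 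Running the argument separately for statically resolved labels and for dynamically computed addresses shows both must be obfuscated, which matches the ``static and dynamic'' remark preceding the theorem.

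I expect the only real subtlety --- the ``hard part'' --- to be the dynamic case: unlike a static label, the memory location an instruction accesses may be computed at run time, so I must be careful that ``access the same memory location'' is quantified over the observed execution rather than over the static code, and that it is precisely the adversary's \emph{dynamic} analysis capability that makes such a collision observable. This is also why the theorem, like Theorem~\ref{th:cfg-decorrelation}, is phrased as a necessary condition rather than a recipe, and why a purely quantitative variant (how many decoy shared accesses suffice when condition~(1) is only partially met) is deferred. The formal bookkeeping of negating a disjunction of a universal statement~(1) and an existential statement~(2) is routine but worth stating explicitly so that the two assumptions (not~1) and (not~2) are each used in the right place.
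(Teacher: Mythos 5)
Your proposal is correct and follows essentially the same route as the paper: negate both disjuncts, then observe that any pair of instructions seen sharing a label or memory location must, under those negations, come from the same source program with probability $1$, contradicting Equation~\ref{eq:unint}. Your added remarks --- that the right-hand side of Equation~\ref{eq:unint} must be bounded away from $1$ for the contradiction to bite, and that the static and dynamic cases should be treated separately --- are refinements the paper's one-line proof leaves implicit, but they do not change the argument.
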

\begin{proof}
Suppose, for the sake of contradiction, that accesses to the same label or memory location from instructions that originated from the same input program $\hat s_1, \hat s_2 \in \mathcal{O}(P) :  s_1, s_2  \in P_i$ still appear to the adversary as accesses to the same label or memory location, either statically or during execution, and that obfuscated instructions that originated from different input machines never access the same label or memory location. Therefore, if an adversary observes two obfuscated instructions  $\hat s_1, \hat s_2 \in \mathcal{O}(P)$ that access the same label or memory location, either statically or during execution, he knows with probability 1 that $s_1, s_2 \in P_i$, which contradicts Equation \ref{eq:unint}.
\end{proof}

Theorem~\ref{th:data-access-decorrelation} implies that instruction decorrelation requires both a static program transformation that rewrites statically-defined memory accesses and a run-time component that makes sequential memory accesses to the same location also use what appears to the adversary as different offsets. Of the three properties we have derived, this is the most costly and complex to implement, and the only one to require a runtime component. With this conceptual background, we now present how a practical obfuscator can meet those properties.

\section {Implementation}
\label{sec:construction}
In this section, we present an implementation of an obfuscator that hides which source program an obfuscated instruction originated from (Theorem~\ref{th:singleinst}) and  data access dependencies between instructions (Theorem~\ref{th:data-access-decorrelation}). 

Our implementation expects input programs that already have similar instructions distributions (Theorem~\ref{th:distr-decorrelation}) and only use the instructions of a custom intermediate representation that are control-flow independent (Theorem~\ref{th:cfg-decorrelation}). Both of these properties are relatively straight-forward to achieve using well-known compilation techniques, so we only describe at a high-level how to achieve them and we focus most of our discussion on how to achieve source and data dependency hiding.

The rest of this section is structured as follows. Section~\ref{sec:implementation:overview} provides an overview of the architecture of our implementation. Section~\ref{sec:implementation:hiding-cfg-deps} describes how to hide control-flow dependencies. Section~\ref{sec:implementation:uniform-id} describes how the distribution of instructions for input programs can be made uniform. Section~\ref{sec:implementation:hiding-source-program} explains how the obfuscator hides the source program of instructions. Section~\ref{sec:implementation:-hiding-data-dependencies} discusses the method employed by the obfuscator to hide data dependencies within programs instructions.  Section~\ref{sec:example} illustrates the behaviour of our implementation with a compilation example.

\subsection{Architecture Overview}
\label{sec:implementation:overview}

\begin{figure}[th]
    \centering
    \includegraphics[width=0.5\textwidth]{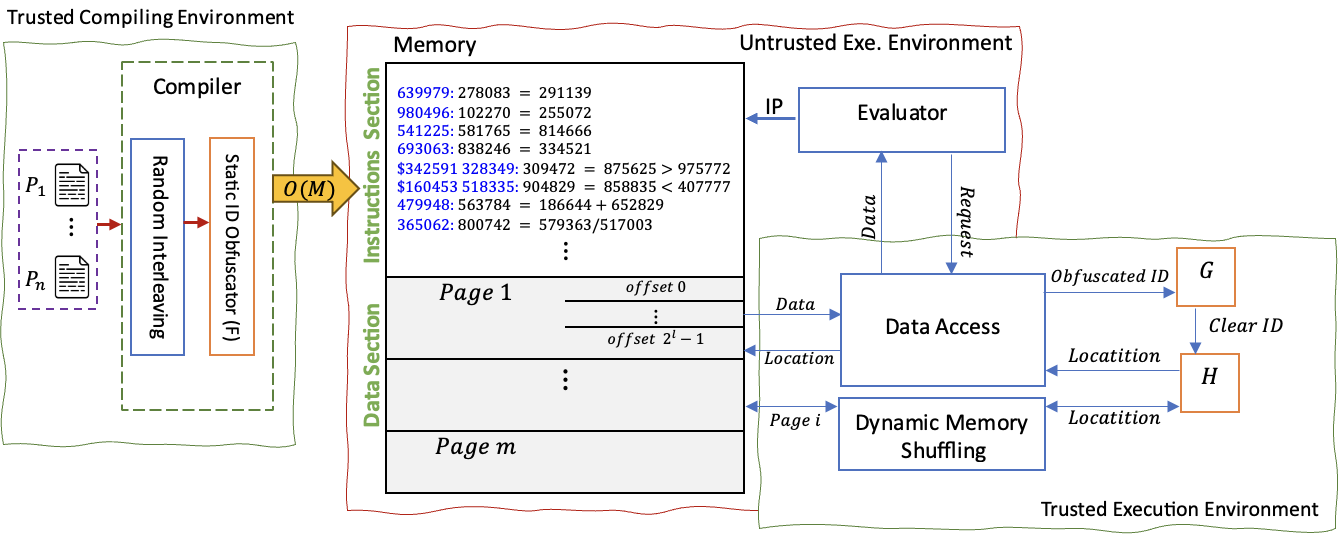}
    \caption{Architecture of our implementation}
    \label{fig:overview}
\end{figure}

Figure~\ref{fig:overview} provides an overview of our implementation. Prior to execution, a compiler, shown on the left side of the figure, translates input programs into instructions in the format expected by our interpreter, shown on the right. This compiler performs two static transformations: 1) it hides the original source of instructions by randomly interleaving the instructions of multiple input programs; 2) it removes data dependencies from the instructions by obfuscating every label used to access memory locations using a secret key $sk$ to make them appear as random numbers. The compiler executes within a trusted execution environment (TEE) to hide $sk$ from the adversary. This secret key is later securely shared with the interpreter to keep it hidden from the adversary.

The output of the compiler is stored in a read-only part of memory, outside of the trusted execution environment, shown in the middle of Figure~\ref{fig:overview} as the instruction section. The rest of the memory, which we call the data section, may be used for reading and writing data by the program. That data section is periodically shuffled to break the correlation of sequential data accesses of related instructions.

Instructions are executed by the evaluator. The next instruction to execute is stored in an \textit{instruction pointer} (IP) within the evaluator, outside of the TEE. The evaluator is however only one part of the interpreter: another part resolves memory locations within a TEE because the translation back to memory locations requires the secret key $sk$.

Finally, periodic memory shuffling is performed by reading individual blocks of the main memory, also called \textit{pages}, and shuffling data within a block inside the TEE. This shuffling is performed every $n$ data access request on average so that an adversary may not find correlations by simply querying for all labels to learn the current mapping. The functions $G$ and $H$ used to retrieve the correct memory locations are described later in Section~\ref{sec:implementation:-hiding-data-dependencies}.

\subsection{Hiding Control-Flow Dependencies within a Single Program}
\label{sec:implementation:hiding-cfg-deps}
\label{sec:predicate}

Our implementation expects instructions that are not correlated due to control-flow dependencies, e.g., a comparison instruction that sets an internal register of the CPU followed by a branching instruction that is conditional on the state of that internal register are correlated. Control-flow independence is enforced through the use of an intermediate representation that reifies those dependencies as \textit{predicates} whose value is computed eagerly and stored in memory as any other data item. This way, the techniques used to decorrelate data dependencies, described later in Section~\ref{sec:implementation:-hiding-data-dependencies}, can also be used to hide the control-flow dependencies as well.

\subsubsection{Predicate-based Intermediate Representation}

\begin{table}
  \centering
  \caption{Concrete Syntax of $L_{cfi}$ Language}
  \label{tbl:syntax}
  \begin{tabular}{|ll |cc|}
    \hline
    type ::= & bool \:|\: int \:|\: float \\
    declar ::= & type \: var \\
    cmp ::=  & $ < \:|\: \leq \:|\: > \:|\: \geq \:|\: == \:|\: !=$ \\
    bools ::= & true \:|\: false \\
    exp ::= & bools\:|\: int \:|\: bool\_var\:|\: int\_var \:|\: \\
      & $ -exp  \:|\:exp + exp \:|\: exp-exp \:|\:$ \\
    & $exp *exp \:|\:exp / exp\:|\:exp \% exp \:|\:$ \\
     & $\sim exp\:|\:exp || exp \:|\: exp\: \&\& \:exp \:|\:$ \\
     & $exp \:cmp\: exp\:|\: (exp) $ \\
     pred ::= & bools \:|\: bool\_var \\
     inst ::= & print("string", exp) \:|\: exp \\
     stmt ::= & (\$label)? \:|\: pred : inst \\
     $L_{cfi} ::=$ & $declar^* \: stmt^*$ \\
    \hline
  \end{tabular}
\end{table}

Syntactically, we write statements with predicates as \textit{predicate:instruction}. The predicate of an instruction contains the conjunction of all conditions that must be true for that instruction to be executed. For example, this may include conditions tested in \texttt{if} statements. Table \ref{tbl:syntax} shows the concrete syntax of the language $L_{cfi}$, which is used for writing all input programs $P_1, \ldots, P_n$.

\begin{figure}[ht]
  \begin{subfigure}[b]{1.2 in}
    \includegraphics[width=.9\textwidth]{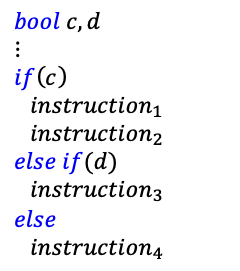}
    \caption{Regular}
    \label{fig:regular}
  \end{subfigure}
  %\hfill
  \begin{subfigure}[b]{1.2 in}
    \includegraphics[width=.9\textwidth]{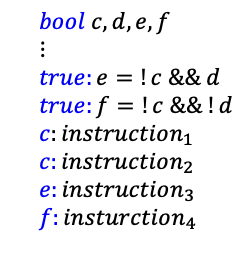}
    \caption{Predicate based}
    \label{fig:redundant}
  \end{subfigure}
  \caption{Control-Dependent vs. Control-Independent Syntaxes}
   \label{fig:regredundant}
\end{figure}

Figure~\ref{fig:regredundant} illustrates how an if statement can be translated to a predicate form. The semantics of the if statement implies that $instruction_1$ and $instruction_2$ should only be executed if condition $c$ is true,  $instruction_3$ is only executed if $c$ is false and $d$ is true, and $instruction_4$ should only be executed if both $c$ and $d$ are false. This is represented in predicate form with explicit variables and boolean operations, including the $"!"$ characte to represent logical negation and $"\&\&"$ is used to denote logical conjunction (logical AND). 

In addition to decorrelating instructions from control-flow dependencies, this approach also simplifies interleaving the instructions of multiple programs, which we will discuss in more detail in Section~\ref{sec:implementation:hiding-source-program}.

\subsection{Uniformizing Instruction Distributions}
\label{sec:implementation:uniform-id}

As mentioned in section \ref{sec:obfISint}, we assume that the adversary has prior knowledge of the probability distribution of input programs. Figure \ref{fig:dist} illustrates this information for programs $P_1$ and $P_2$, with the corresponding code snippets available in Figure \ref{fig:example}. By examining these distributions, the adversary can observe that the probability of encountering a division instruction in $P_1$ is zero. Therefore, when they come across a division instruction in $\mathcal{O}(P)$, they can deduce that it belongs to $P_2$. Additionally, the adversary gains an advantage with respect to addition instructions, as they are more likely to be associated with the first program.

\begin{figure}[H]
  \begin{subfigure}[b]{1.2 in}
    \includegraphics[width=1.3\textwidth]{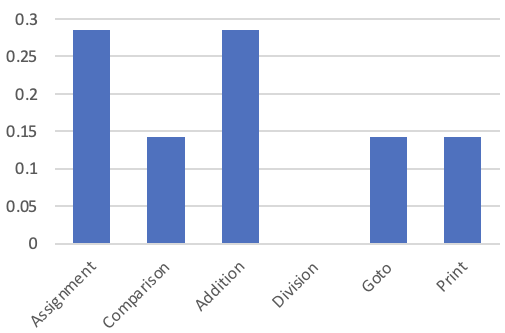}
    \caption{$P_1$ Distribution}
    \label{fig:exInterleaved}
  \end{subfigure}
  %\hfill
  \begin{subfigure}[b]{1.2 in}
    \includegraphics[width=1.3\textwidth]{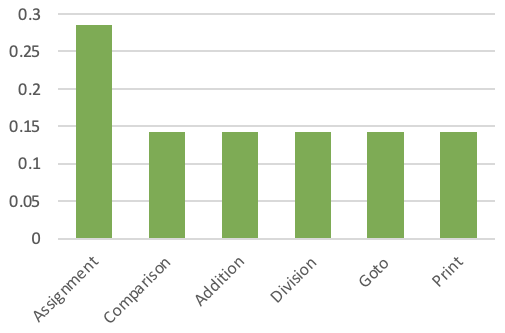}
    \caption{$P_2$ Distribution}
    \label{fig:exObfuscated}
  \end{subfigure}
  \caption{Probability distribution of $P_1, P_2$}
 \label{fig:dist}
\end{figure}

\subsection{Hiding the Source Program}
\label{sec:implementation:hiding-source-program}

%\textbf{Interleaving}. 

%Constructing a random interleaving obfuscation requires careful consideration of instruction dependencies. Some instructions are control-dependent, meaning their execution depends on a condition determined by the result of another instruction. Similarly, there are instructions that are data-dependent, relying on the data generated by other instructions.

%To handle control-dependent instructions, we introduce redundancy in our programming syntax by assigning each instruction a predicate. This predicate is then examined during instruction execution to determine whether the instruction should be executed or not. In Section \ref{sec:predicate}, we further explore this approach to gain a deeper understanding.

To preserve data dependencies among instructions, we ensure the relative order of instructions is maintained across all programs. This is achieved through the random interleaving process. As Algorithm  \ref{alg:interleaving} shows, the process in each step begins by randomly selecting a program from the set of all programs ${P}=\{P_1, \ldots, P_n\}$, where the selection probabilities are based on the size of each program. Then, it selects the last unseen instruction within the chosen program to append it to the intermediate result. 

\begin{algorithm}[h]
\caption{Random Interleaving}
\label{alg:interleaving}
\begin{algorithmic}[]
\Statex

\State $result:=[]$
\State $count:=\sum_{i=1}^n |P_i|$

\While {$count \geq 0$}
	\State $candidate:=-1$
	\State $r \leftarrow randomBetween(0,count-1)$
	\For { $i:=1$ to $n$}
	       	\If{$r<|P_i|$} 
    		\State $candidate := i$
		\State \textbf{break}
		\EndIf 
	\EndFor
	\If{$candidate <0$}
	 \State $candidate := n$
	\EndIf
	\State $s:= \Call{Pop}{$$P_{candidate}$$}$
	\State $\Call{Push}{$$result,s$$}$
	\State $count = count-1$
\EndWhile
\State \Return result
\end{algorithmic}
\end{algorithm}

\subsubsection{Instruction Independence within CFG Cycles}

 Interleaving instructions from distinct loops can introduce complexities and potential issues. Even in a seemingly straightforward scenario where one loop is interleaved with sequential instructions from another program, unintended execution of certain instructions may occur. When the interleaved loop redirects execution back to its own instructions, instructions from the second program with predicates evaluating as true may mistakenly be executed.
To mitigate this issue, we first introduce a mechanism that prevents the re-execution of previously executed instructions, and then utilize this mechanism to prevent loop misbehaviors. 

The re-execution prevention mechanism is accomplished by augmenting our predicate structure with an additional field, initially set to -1, to store the last execution line in which the predicate was evaluated. During the execution of the interleaved program $\mathcal{O}(P)$, for each instruction, the evaluator compares this field with the current line number. If the stored line number is greater than the current line number, the evaluator skips the execution of the instruction. Algorithm \ref{alg:predEval} demonstrates the evaluation process for predicates. This mechanism ensures that once an instruction is executed, the field \textit{last line index} of its predicate is updated. As a result, the instruction will not be executed again because its predicate evaluation will return false.
 
\begin{algorithm}[H]
\caption{Predicate Evaluation}
\label{alg:predEval}
\begin{algorithmic}[]
\Statex

\If{$CurrentLineIndex()<predicate.LastLineIndex$} \State \Return $false$ \EndIf 
\State $predicate.LastLineIndex = CurrentLineIndex() $
\State \Return $predicate.value$
\end{algorithmic}
\end{algorithm}

To prevent loop misbehaviors, each branching instruction instructs the evaluator to reset the field \textit{last line index} of loop instructions, i.e set them to -1. By doing so, the loop instructions can be re-executed while preventing the execution of other instructions due to this re-execution prevention mechanism. The List of all predicates to be rest can be either recognized by the compiler itself or provided by the programmer explicitly.  

\begin{figure}[ht]
    \includegraphics[width=0.35\textwidth]{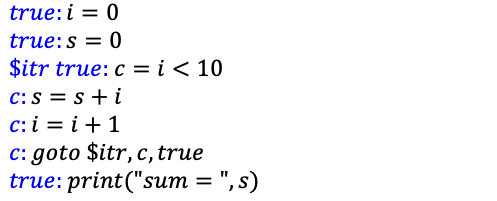}
    \caption{Simple loop example }
    \label{fig:loopexample}
\end{figure}

Figure \ref{fig:loopexample} provides a concrete example illustrating the summation of numbers from 0 to 9 using a loop. The first argument of the goto instruction specifies the line label for controlling the flow of execution, while the remaining two arguments represent lists of predicates that require resetting during the execution process. As discussed in Section \ref{sec:implementation:-hiding-data-dependencies}, in our construction, constant values are treated as variables. Therefore the constant value "true" in Figure \ref{fig:loopexample} is also included in the list of predicates that should be reset.

\subsection{Hiding Data Dependencies}
\label{sec:implementation:-hiding-data-dependencies}

%The evaluator's data requests are managed by a \textit{data access} unit. Since all memory accesses in the instructions undergo static ID obfuscation during compilation, the data access unit first deobfuscates them by invoking function $G$ to obtain clear IDs. Additionally, due to the dynamic shuffling of the data section in memory, these clear IDs are passed through function $H$ to determine the current location of the data in memory. As depicted in Figure \ref{fig:overview}, both functions $G$ and $H$ are executed within a TEE.

In order to conceal data dependencies among instructions, we employ both static and dynamic mechanisms. The static mechanism assigns random labels to data items during compile time, while the dynamic mechanism periodically shuffles the memory data segment.

\subsubsection{Replacing Memory Labels with Random IDs}
\label{sec:randomOffsets}

Having randomly interleaved instructions, to obscure the data linkage between instructions we employ non-repetitive random labels for each memory reference in any instruction. To achieve this we organize data section of the obfuscated program in a flat structure where all constant values and variables from all input programs are stored. The first byte of each data item in this flat architecture is used as a reference to the corresponding data item. Figure \ref{fig:flatstructure} illustrates this concept, where integer variables are represented as 4 bytes and Boolean variables as 1 byte.

All instructions in the obfuscated program utilize these labels or IDs to refer to the data elements. However, if we were to directly use these clear IDs in the compiled output, adversaries could still discern the connections between instructions. Therefore, we employ a mechanism that randomizes these labels, making them unpredictable to potential adversaries. 

\begin{figure}[ht]
    \centering
    \includegraphics[width=0.2\textwidth, height=0.2\textheight]{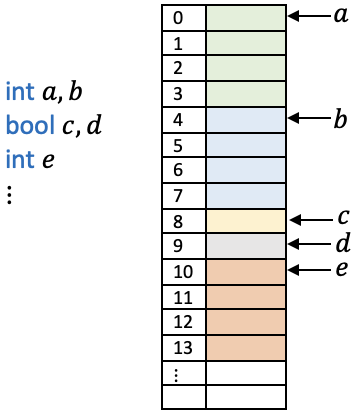}
    \caption{Flat Memory Structure of Data Section}
    \label{fig:flatstructure}
\end{figure}

In essence, for every label $l$, we consider a set of random numbers $R$. When the compiler encounters a reference to $l$ in any instruction, it selects one element from $R$ and replaces it with $l$. Conversely, in the runtime environment, when the evaluator needs to resolve a memory reference, it first identifies the set $R$ associated with each data item and then determines the clear ID. The two following functions formalize this mechanism:

\begin{enumerate}
 \item $F(sk, x) = R$
\item $G(sk, r) = x$
\end{enumerate}

where $x$ is a data ID, $sk$ is a secret key and $R$ is a collection of random integers, each serving as an obfuscated ID.

In our implementations, we construct a simple set of functions $F$ and $G$ utilizing modular arithmetic. The $sk$ is a random numbers in $[\alpha t, \beta t]$ where $t$ is the size of flat memory structure and $\alpha$ and $\beta$ are two integer parameters such that $\alpha < \beta$. The function $F$, for each clear ID $x$, mines a random number from a bounded range, like all 6 digit integers, that belongs to a congruence class modulo $sk$. Conversely, the function $G$ easily retrieve the ID $x$ by calculating the remainder of the obfuscated ID modulo $sk$. While this construction appears simple and may not be secure enough for certain applications, it is possible to design and employ more robust functions that satisfy the aforementioned properties.

\begin{figure}[H]
  \begin{subfigure}[b]{1.2 in}
    \includegraphics[width=.9\textwidth]{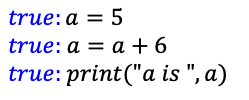}
    \caption{Normal}
    \label{fig:normal}
  \end{subfigure}
  %\hfill
  \begin{subfigure}[b]{1.2 in}
    \includegraphics[width=1.42\textwidth]{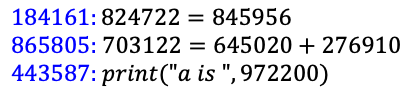}
    \caption{ID obfuscated}
    \label{fig:offsetobfs}
  \end{subfigure}
  \caption{Static ID obfuscation using modular arithmetic scheme with $sk=38$. }
 \label{fig:addrobfs}
\end{figure}

The code snippet in Figure \ref{fig:addrobfs} demonstrates a straightforward example along with its static ID obfuscated version, implemented using a simple modular arithmetic scheme. In the obfuscated version, the memory location associated with the variable $a$ is accessed using a unique random number for each occurrence of $a$. Similarly, the literal "true" is treated as a variable holding the value "true" and is referenced using different random numbers, similar to the variable $a$.

\subsubsection{Randomizing Memory Content}
\label{sec:shuffling}

While static ID obfuscation mechanism can provide resistance against \textit{static analysis}, it is still possible for adversaries to monitor the memory during the execution of the obfuscated program and observe which instructions access the same memory locations. In order to thwart this type of \textit{dynamic analysis}, we periodically modify the location of the data associated with each statically obfuscated ID such that it will appear random and unpredictable to any potential adversary. 

To achieve this, we employ a pseudo-random permutation function $H:\{0,1\}^k \times \{0,1\}^l \rightarrow \{0,1\}^l$ within the running environment. This function is employed to periodically shuffle each $2^l$-byte page of our flat memory structure, utilizing a k-bit counter. More precisely, by taking into account the value of the counter, the function $H$ maps a unique new offset to each byte based on its clear ID. Subsequently, the running environment shuffles the entire page according to these new offsets.

To gain a visual understanding of the function $H$, it can be envisioned as a table comprising $2^k$ columns. Each column represents a random permutation of numbers ranging from 0 to $2^{l-1}$. Additionally, a counter is utilized to indicate a specific column. Figure \ref{fig:shuffling} offers a graphical representation that illustrates this perspective.

\begin{figure}[ht]
    \centering
    \includegraphics[width=0.40\textwidth]{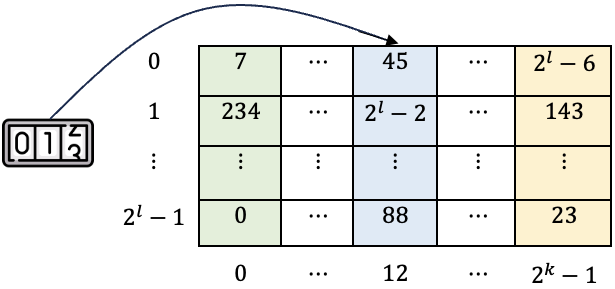}
    \caption{Dynamic offset obfuscation}
    \label{fig:shuffling}
\end{figure}

%\subsection{Design Overview}

%In order to achieve interleaving-based obfuscation, it is essential to employ a random approach for interlacing instructions. This is necessary because, as a security primitive, the interleaving algorithm itself may be known to potential adversaries, which would aid them in reconstructing input programs if deterministic approaches were used. On the other hand, in order to preserve the functionality of input programs, it is important to consider the various types of dependencies among instructions. 

%It is important to note that, any deterministic interleaving approach would not achieve the unintelligibility formalized by our definition. This arises from the fact that, as a security primitive, we always assume that any algorithm providing security guarantees, in this case the interleaving algorithm, is known by the adversary. Therefore, in the case of deterministic interleaving, the adversary can easily find the correlation between interleaved instructions and recover all input programs.

\subsection{Obfuscation Example}
\label{sec:example}
In this section, we show the application of our obfuscator to a practical example. To demonstrate the capabilities of the system, we utilize a pair of simple programs written in our control-independent programming syntax, represented as the input program set $P=\{P_1, P_2\}$. The program $P_1$ calculates the sum of numbers from 0 to 9, while $P_2$ computes the sum of a series of numbers $2, 4, 8, \ldots, 1024$. Figure \ref{fig:example} provides the source code for these two programs.

\begin{figure}[ht]
  \begin{subfigure}[b]{1.2 in}
    \includegraphics[width=1.0\textwidth]{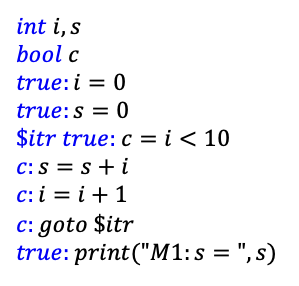}
    \caption{$P_1$ source code}
  \end{subfigure}
  %\hfill
  \begin{subfigure}[b]{1.2 in}
    \includegraphics[width=1.0\textwidth]{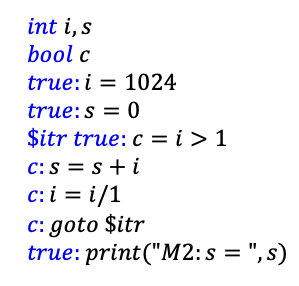}
    \caption{$P_2$ source code}
  \end{subfigure}
  \caption{Concrete Example of Two Programs }
 \label{fig:example}
\end{figure}

\subsubsection{Compilation}

Figure \ref{fig:exInterleaved} shows a randomly interleaved version of the two programs. In our implementations, we use the postfix "\_i" for each variable or label that belongs to program $P_i$. It is important to note that in this listing the constant value \textit{true} is used as a predicate, and since predicates have a \textit{last line index} field, each program has its own constant value \textit{true}, while other constant values could be considered common among all programs.

Furthermore, the output presented in Figure \ref{fig:exInterleaved} is an intermediate result produced by the compiler within a trusted environment, inaccessible to adversaries.  Figure \ref{fig:exObfuscated} illustrates the compiler output accessible to adversaries, which has undergone \textit{static ID obfuscation} using the key $sk=109$. It is important to note that in this implementation, strings are presented in their plain format, which allows an adversary to easily recognize their relevance to the original programs. To address this issue, cryptographic schemes or other obfuscation techniques can be employed. These additional measures can be implemented without affecting the overall integrity of our general model.

\begin{figure}[ht]
  \begin{subfigure}[b]{1.25 in}
    \includegraphics[height=1.25in]{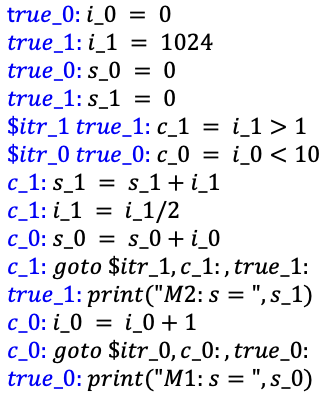}
    \caption{Interleaved $P_1$, $P_2$}
    \label{fig:exInterleaved}
  \end{subfigure}
  \begin{subfigure}[b]{1.25 in}
    \includegraphics[height=1.25in]{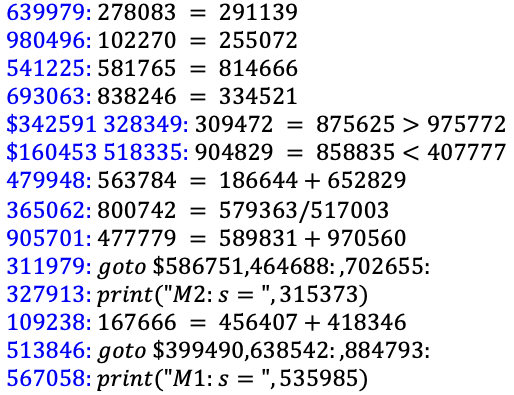}
    \caption{ID obfuscated}
    \label{fig:exObfuscated}
  \end{subfigure}
  \caption{Compiler output for two input programs $P_1, P_2$}
 \label{fig:example}
\end{figure}

\subsubsection{Runtime Environment Behavior}
To illustrate how the runtime environment simultaneously manages static and dynamic memory obfuscations, consider the value "278083" in Figure \ref{fig:exObfuscated} that corresponds to the 4-byte variable "$i\_0$". By utilizing the function $G$, introduced in section \ref{sec:randomOffsets}, the running environment can calculate $G(109, 278083)=278083\% 109 = 24$ to determine the clear ID of the variable "$i\_0$". Since "$i\_0$" is a 4-byte variable, the labels 24, 25, 26, and 27 would refer to its individual bytes. With this knowledge and the current value of the circular counter, the running environment utilizes the function $H$ to compute the dynamic addresses of these bytes. Specifically, when the counter value is 12, the running environment can compute $H(12,24)$, $H(12,25)$, $H(12,26)$, and $H(12,27)$ to determine the corresponding dynamic addresses. By utilizing these addresses, the running environment can successfully retrieve the value of the variable "$i\_0$".

It is crucial to emphasize that, in our constructions of the functions $F$, $G$, and $H$, must be invoked within a \textit{trusted execution environment} to prevent adversaries from accessing the calculation process.

\section{Security Analysis}
\label{sec:sec-analysis}

For a strong obfuscation construction in which all instructions look independent to PPT adversaries we can formalize the probability that an adversary can reconstruct a target program $P_t$. Suppose that all instructions in $P_t$ are sequentially dependent to each other, i.e each instruction $s_{i+1}$ can be executed only after execution of instruction $s_i$ for all $i\in \mathcal{N}_{|M_t|}$. In this case, among all possible permutations of instructions only one of them implements the correct algorithm of $P_t$. This means that among all $(\sum_{k=1}^n|P_k|)!$ possible permutations of $\mathcal{O}(P)$, only $((\sum_{k=1}^n|P_k|)-|P_t|)!$ possible permutations reconstruct the program $P_t$. In other words, the probability of success for an adversary in this case is as follows:
\begin{equation}
\label{eq:win}
Pr[A\ wins]=\frac{(\sum_{k=1}^n|P_k|)-|P_t|)!}{(\sum_{k=1}^n|P_k|)!}
\end{equation}
In the special case of equal-size input programs, i.e $|P_i|=l$ for all $i$s, the Equation \ref{eq:win} simplifies as the following negligible function:

\begin{equation}
\label{eq:winEqal}
\begin{split}
Pr[A\ wins]=\frac{(nl-l)!}{(nl)!}=\frac{1}{(nl)(nl-1) \ldots (nl-l+1)}\\
<\frac{1}{(nl-l)(nl-l)\ldots(nl-l)}=\left(\frac{1}{(n-1)l}\right)^l
\end{split}
\end{equation}

Equation \ref{eq:winEqal} demonstrates that the security of our approach relies on two factors, the number of input programs and the length of each program.

In our construction, the security of the output generated by the compiler relies on various factors including selecting the appropriate set of input programs, designing the functions $F, G$, and $H$ effectively, and ensuring a trusted execution environment for these functions.

\textbf{Design of functions}: In our construction, we have introduced functions $F$ and $G$ in section \ref{sec:randomOffsets}, as well as function $H$ in section \ref{sec:shuffling}, which offer certain security guarantees. These functions have been incorporated into our abstract design in a manner that allows for flexibility in their choice and implementation. Naturally, the selection and design of a specific approach depend on the desired security requirements and the associated cost and benefit considerations.

In this paper, we have implemented functions $F$ and $G$ using a simple design based on modular arithmetic. However, it is worth noting that these functions can also be designed using cryptographic primitives to enhance their strength and security depending on specific needs and goals of the system.

\textbf{Trusted execution environment}: In the current design, all three functions $F$, $G$, and $H$, incorporate the use of secret keys. The function $F$ is implemented in the compiler, which operates within a trusted execution environment according to our assumptions. On the other hand, the functions $G$ and $H$ are invoked in the evaluator, which is subject to monitoring by adversaries. To prevent the disclosure of secret keys, these invocations must be executed within a trusted execution environment.

In addition, it is crucial that dynamic memory shuffling, as outlined in section \ref{sec:shuffling}, takes place within a trusted execution environment. If performed outside of this secure environment, adversaries may be able to monitor and disrupt the shuffling process, rendering it ineffective. Therefore, it is necessary to ensure that the shuffling is conducted in a trusted execution environment. Once the randomized memory pages have been generated, they can then be safely transferred to the memory of an untrusted execution environment.

Considering the input programs $P_1, \ldots, P_n$, where their lengths are sufficiently large and almost equal, the probability of a randomly selected instruction belonging to a particular program is approximately $1/n$. This implies that, on average, and based on the pigeonhole principle, within a sequence of $n+1$ consecutive instructions in $\mathcal{O}(P)$, two of them belong to the same input program. Therefore, if we need strong security guarantees it is necessary to shuffle the modified memory pages after the execution of every $n$ consecutive instructions. This process, while ensuring stronger security, can impose a computational burden on our model. As a result, for different applications, it is essential to establish a tradeoff between this overhead and the desired level of security guarantees. By carefully considering the specific requirements and constraints of each application, we can determine the appropriate compromise between computational efficiency and the level of security needed.

\section{First Implementation Results}
\label{sec:evaluation}

We wrote an interpreter for a simple CPU featuring basic
  arithmetic operations, comparison and jump instructions for which
  our compiler generates obfuscated code. Both the instructions and
  the memory content are in cleartext format and the interpreter
  (potentially under control of the attacker) simply executes the
  obfuscated binary. We also implemented the trusted memory mapping
  and reshuffeling unit through which all memory access (to the
  cleartext content) has to pass.

In order to get a first ballpark figure on the runtime overhead
  introduced by such an interpretation, we wrote two small programs
  that were compiled into one obfuscated binary. The first program~P1
  calculates the average of an integer list of 4000 numbers, while the
  second program~P2 computes the dot product of two 10000 element
  vectors. As shown in Table~\ref{tbl:expres}, the runtime of the
  combined and obfuscated program only increases by roughly 10\% when
  compared to the sum of the interpretation runtimes of the original
  two programs. Note that for this performance test we did not
  equalize the instruction probabilities nor the loop durations. As a
  comparison point we also measured that the same functionality
  compiled to native execution on the Apple M2 chip terminates after
  1~millisec while in Python it takes 3 milllisec. Clearly, interpretation comes with a huge penalty, but
  the obfuscating transformation itself does not add significantly to
  the overall execution time.

\begin{table}[htbp]
  \centering
  \caption{Independent runtimes of two input programs P1 and P2 compared to the
    runtime of the obfuscated merged program $\mathcal{O}(\{P1,P2\})$, in seconds and
    averaged over 10 test runs.}
  \label{tbl:expres}
  \begin{tabular}{|ccc|c|c|}
    \hline
    P1 & P2 & Sum &  $\mathcal{O}(\{P1,P2\})$ & Overhead \\
    \hline
    0.6655 & 2.0616 & 2.7271 & 2.9772 & 9.17\% \\
    \hline
  \end{tabular}
\end{table}

%\section{\old{Old ``Evaluation'' Section}}
%\old{To evaluate the overhead of the proposed implementation, we implemented two programs $M_1$ and $M_2$ which calculate average value of an integer list of 4000 numbers and dot production of two 10000 dimension vectors respectively. We did not have access to a Trusted Execution Environment and evaluate all the instructions on a regular MacBook Pro with Apple M2 chip and 24 Gigabyte of memory. Table \ref{tbl:expres} summarizes our experimental results for 10 different compilations. It shows that on average the overhead of the proposed approach, without considering the TEE overhead, is about 10 percent. }

\section{Conclusions and Open Problems}
\label{sec:conclusion}

In this paper we explore practical ways of obfuscating programs
  based on randomized instruction interleaving instead of cryptographic program
  transformation. By combining several programs to obfuscate into
  one big program to execute, a compiler can impose statistic
  properties of the resulting code that makes it impossible for
  an attacker to link single instructions to one of the original
  input programs. Moreover, by memory address rewriting and
  reshuffling of memory cells at run time, we reduce the
  elements of a CPU that must execute inside a trusted environment
  while preventing dynamic analysis of the running code.
  Our definition of unintelligibility focuses on the
  independence of any two instructions, from which we derive
  several desirable properties that the obfuscated code
  must have. We wrote a first compiler that is able to enforce
  some of these properties, most importantly the decorrelation
  of an instruction from its origin, but also the decorrelation
  of memory access to any of the original input programs.

Our current approach assumes an honest-but-curious attacker
  which does not tamper with neither the obfuscated instructions
  nor the memory content. In a next step we want to address
  a man-at-the-end scenario where an attacker can actively
  perturbate the program execution in order to extract
  data and instruction dependencies. A second research
  challenge is to reduce the footprint of the currently
  required trusted unit for memory address
  translation and shuffling. Removing the need of a TEE could make our approach more widely applicable. 

%\old{In this paper, we have investigated a method for achieving obfuscation through the random interleaving of different programs, focusing on satisfying essential security properties. However, our study highlights two important open problems that require further research. Firstly, our adversary model assumes an honest but curious attacker who does not tamper with the obfuscated program. However, it is crucial to consider stronger adversaries, particularly the Man at the End (MATE) attackers. Exploring the implications of MATE adversaries and developing countermeasures against their potential attacks would enhance the robustness of our obfuscation method.}

% -----------------------------------------------------------------------------------------------------------------------------------------------------------------
%%
%% The acknowledgments section is defined using the "acks" environment
%% (and NOT an unnumbered section). This ensures the proper
%% identification of the section in the article metadata, and the
%% consistent spelling of the heading.
\begin{acks}
We would like to express our sincere gratitude to Osman Bicer for numerous insightful discussions concerning the security properties of this work.
\end{acks}

%%
%% The next two lines define the bibliography style to be used, and
%% the bibliography file.
\bibliographystyle{ACM-Reference-Format}
\bibliography{main}

%%
%% If your work has an appendix, this is the place to put it.
\appendix
\section{Proof for Theorem \ref{th:distr-decorrelation}}
\label{app:proof1}
The total number of instructions of type s drawn from program $P_i$ is equal to $|P_i|*D_i(S=s)$. Therefore, the probability that a random instruction which is drawn from $\mathcal{O}(P)$ originates from $P_i$ is as follows:

\begin{equation}
\label{eq:totalnum}
Pr[s \in P_i]=\frac{|P_i|D_i(S=s)}{\sum^n_{k=1}|P_k|D_k(S=s)}
\end{equation}

On the other hand, the probability that an adversary wins the game of correctly guessing the origination of an instruction is:

\begin{equation}
\label{eq:genwin}
Pr[win]=\sum^n_{i=1}Pr[selects \ P_i]*Pr[s \in P_i]
\end{equation}

Putting \ref{eq:totalnum} to \ref{eq:genwin} will give:
\begin{equation}
\label{eq:maingame}
\begin{split}
Pr[win]=\sum^n_{i=1}\left(Pr[selects \ P_i]*\frac{|P_i|D_i(S=s)}{\sum^n_{k=1}|P_k|D_k(S=s)}\right) =\\
\frac{1}{\sum^n_{k=1}|P_k|D_k(S=s)}\sum^n_{i=1}\left(Pr[selects \ P_i]*|P_i|D_i(S=s)\right)
\end{split}
\end{equation}

An adversary with no prior knowledge of the distribution of instructions, selects programs uniformly with the probability $1/n$, Therefore:

\begin{equation}
P[win]=\frac{1}{\sum^n_{k=1}|P_k|D_k(S=s)}\sum^n_{i=1}\left(\frac{1}{n}*|P_i|D_i(S=s)\right)=\frac{1}{n}
\end{equation}

If the adversary has knowledge of the distributions, she can make educated guesses when selecting programs. In this case, the adversary can always choose the program $P_t$ that maximizes the value of $|P_i| D_i(S=s)$. In other words:
\begin{equation}
Pr[selects \ P_i]=
\begin{cases}
    \text{1} &  i=t \\
    \text{0} & o.w
\end{cases}
\end{equation}

With this strategy, the probability Equation \ref{eq:maingame} simplifies to the following formula:
 
 \begin{equation}
Pr[win]=\frac{|P_t|D_t(S=s)}{\sum^n_{k=1}|P_k|D_k(S=s)}
\end{equation}

In this case, the advantage of the adversary over random guessing can be expressed by the following non-negligible value:

\begin{equation}
\label{eq:genAdv}
\begin{aligned}
\epsilon &= \frac{|P_t|D_t(S=s)}{\sum^n_{k=1}|P_k|D_k(S=s)} - \frac{1}{n} \\
&= \frac{n|P_t|D_t(S=s)-\sum^n_{k=1}|P_k|D_k(S=s)}{n \sum^n_{k=1}|P_k|D_k(S=s)} \\ 
&\geq \frac{n|P_t|D_t(S=s)-\sum^n_{k=1}|P_k|D_k(S=s)}{n^2|P_t|D_t(S=s)} \\
&\geq \frac{n|P_t|D_t(S=s)-|P_t|D_t(S=s)}{n^2|P_t|D_t(S=s)} \\
&= \frac{n-1}{n^2}
\end{aligned}
\end{equation}

The non-negligible advantage of correctly guessing the provenance of a randomly selected instruction contradicts Theorem \ref{th:singleinst}, which assumes that the adversary has no advantage in determining the origin of a randomly chosen instruction.

To make the adversary's advantage $\epsilon$ negligible, input programs must have identical distributions and almost similar sizes. Identical distributions for input programs simplifies the adversary advantage to the following formula which is still a fraction of two polynomials and non-negiligible:
\begin{equation}
\label{eq:sameDistAdvant}
\begin{aligned}
\epsilon &= \frac{|P_t|D(S=s)}{\sum^n_{k=1}|P_k|D(S=s)} - \frac{1}{n} \\
&= \frac{|P_t|}{\sum^n_{k=1}|P_k|}-\frac{1}{n}=\frac{n|P_t|-\sum^n_{k=1}|P_k|}{n\sum^n_{k=1}|P_k|}\\
\end{aligned}
\end{equation}

In Equation \ref{eq:sameDistAdvant} if the length of all input programs are almost the same then $\epsilon$ converges to zero.
%massa et mattis lacinia.

\end{document}